\DeclareMathOperator*{\maxim}{maximize}
\newcommand{\normipt}[1]{\left\|#1\right\|}
\newcommand{\card}{\textbf{Card}}
\newtheorem{theorem}{\bf{ Theorem}}
\newtheorem{proposition}{\bf{ Proposition}}
\newtheorem{definition}{\bf{ Definition}}
\newtheorem{remark}{\bf{Remark}}
\newcommand{\qed}{\nobreak \ifvmode \relax \else
  \ifdim\lastskip<1.5em \hskip-\lastskip
  \hskip1.5em plus0em minus0.5em \fi \nobreak
  \vrule height0.75em width0.5em depth0.25em\fi}
\newlength{\totlinewidth}
\newcounter{substep}
\newlength{\aligntop}
\newlength{\alignbot}
\renewenvironment{align}{%
  \vspace{\aligntop}
  \start@align\@ne\st@rredfalse\m@ne
}{%
  \math@cr \black@\totwidth@
  \egroup
  \ifingather@
    \restorealignstate@
    \egroup
    \nonumber
    \ifnum0=`{\fi\iffalse}\fi
  \else
    $$%
  \fi
  \ignorespacesafterend%
  \vspace{\alignbot}\par\noindent
} \makeatother
\begin{document}
\title{\huge Downlink Cell Association and Load Balancing for Joint Millimeter Wave-Microwave Cellular Networks}\vspace{0em}
\author{
\authorblockN{Omid Semiari$^{\dag}$, Walid Saad$^{\dag}$, and Mehdi Bennis$^\ddag$}\\\vspace*{-.5em}
\authorblockA{\small $^{\dag}$Wireless@VT, Bradley Department of Electrical and Computer Engineering, Virginia Tech, Blacksburg, VA, USA, \\Emails: \protect\url{{osemiari,walids}@vt.edu}\\
\small $^\ddag$ Centre for Wireless Communications, University of Oulu, Finland, Email: \url{bennis@ee.oulu.fi}
}\vspace*{-3.1em}
  }
%
\maketitle
\begin{abstract}
The integration of millimeter-wave base stations (mmW-BSs) with conventional microwave base stations ($\mu$W-BSs) is a promising solution for enhancing the quality-of-service (QoS) of emerging 5G networks. However, the significant differences in the signal propagation characteristics over the mmW and $\mu$W frequency bands will require novel cell association schemes cognizant of both mmW and $\mu$W systems. In this paper, a novel cell association framework is proposed that considers both the blockage probability and the achievable rate to assign user equipments (UEs) to mmW-BSs or $\mu$W-BSs. The problem is formulated as a one-to-many matching problem with \textit{minimum quota constraints} for the BSs that provides an efficient way to balance the load over the mmW and $\mu$W frequency bands. To solve the problem, a distributed algorithm is proposed that is guaranteed to yield a Pareto optimal and two-sided stable solution. Simulation results show that the proposed matching with minimum quota (MMQ) algorithm outperforms the conventional max-RSSI and max-SINR cell association schemes.  
In addition, it is shown that the proposed MMQ algorithm can effectively balance the number of UEs associated with the $\mu$W-BSs and mmW-BSs and achieve further gains, in terms of the average sum rate.
 \vspace{-0cm}
\end{abstract}
\section{Introduction} \vspace{-0cm}
The integration of cellular networks with millimeter-wave (mmW) communication links is a promising solution to meet the high data traffic requirements of tomorrow's wireless services \cite{6736746,Rangan14,andrew16,park15,Ghosh14}. However, mmW communication is known to be inherently intermittent, due to the susceptibility  of its links to signal blockage, due to shadowing by human, buildings, and other obstacles. To this end, mmW base stations (mmW-BSs) must coexist with the conventional microwave base stations ($\mu$W-BSs) to provide $\mu$W connectivity for users, when a reliable mmW communication is not feasible \cite{andrew16,park15}.

Such integrated mmW-$\mu$W networks introduce new challenges for cellular resource management. In particular, the association of user equipments (UEs) to the BSs must now account for the presence of two radio access technologies (RATs) with significantly different propagation environments. In fact, conventional approaches such as maximum signal-to-interference-plus-noise-ratio (max-SINR) and maximum signal strength indicator (max-RSSI) may result in significantly unbalanced load distributions and may not be directly applicable to the multi-RAT setting. That is due to three key reasons: a) mmW links are highly intermittent and have a higher path loss than $\mu$W, b) mmW communication is mostly limited by noise rather than interference, and c) more bandwidth is available at mmW band compared to the $\mu$W frequency band.

The problem of cell association with load balancing has been extensively studied in heterogeneous cellular networks \cite{6008531,6497017,4607241,6774981,andrew16,park15}. The work in \cite{6008531} studies the performance of the max-SINR cell association for heterogeneous networks (HetNets) with load balancing via cell range expansion (CRE). The authors in \cite{6497017} propose a cell association approach based on convex optimization to find a load-aware distributed cell association algorithm for HetNets. Moreover, in \cite{4607241}, a game-theoretic approach is adopted for network selection in HetNets, using an evolutionary game approach. For mmW networks, the work in \cite{6774981} presents a distributed algorithm that yields a fair cell association. A stochastic geometry framework is used in \cite{andrew16} for the decoupled uplink-downlink cell association for traditional macrocells and mmW small cell networks. In addition, the authors in \cite{park15} study resource allocation for mmW-$\mu$W networks where cell association is decoupled in the uplink for mmW users.   

The existing works in \cite{6008531,6497017,4607241,6774981} have focused on $\mu$W or mmW networks, separately and in isolation, and thus, they cannot be applied to integrated mmW-$\mu$W cellular networks. In addition, the authors in \cite{andrew16} and \cite{park15} consider max-RSSI cell association. However, max-RSSI is not a proper association metric for integrated mmW-$\mu$W networks, since it does not properly reflect the achievable rate of the users. Indeed, this rate depends on the allocated bandwidth and the interference, which are completely different between mmW and $\mu$W.

The main contribution of this paper is to introduce a novel cell association framework with load balancing for integrated mmW-$\mu$W cellular networks. First, we show that conventional max-SINR and max-RSSI cell associations can result in significant unbalanced load in mmW and $\mu$W networks. Then, we formulate the proposed cell association problem as a \textit{matching game with minimum quota constraints}, in which the BSs can adjust their minimum quota, in terms of the number of UEs they serve, to balance the network's load. For this game, we show that classical matching solutions such as in \cite{Roth92} and \cite{eduard11} cannot be applied. In contrast, to solve our problem, we propose a novel distributed algorithm that allows UEs to submit association requests to either the mmW-BS or $\mu$W-BS that maximizes its average achievable rate. To achieve a balanced load, BSs approve UEs' requests such that the quota constraints are met. We show that the proposed algorithm yields a Pareto optimal (PO) and stable solution for the UEs. Simulation results show the effectiveness of our approach in integrated mmW-$\mu$W networks.

The rest of this paper is organized as follows. Section II presents the problem formulation. Section III formulates the problem as a matching game. Section IV presents the proposed algorithm. Simulation results are analyzed in Section V. Section VI concludes the paper.

\section{System Model}
Consider the downlink of a cellular network, composed of a set $\mathcal{N}_1$ of $N_1$ mmW-BSs and a set $\mathcal{N}_2$ of $N_2$ $\mu$W-BSs. In this network, a set $\mathcal{M}$ of $M$ UEs are deployed and must be assigned to one mmW-BS or $\mu$W-BS. UEs and BSs are distributed uniformly and randomly within a planar area with radius $r$ centered at $(0,0) \in \mathbb{R}^2$. UEs are equipped with both mmW and $\mu$W RF interfaces allowing them to manage their traffic at both frequency bands.  
\begin{figure}
\centering
\centerline{\includegraphics[width=9cm]{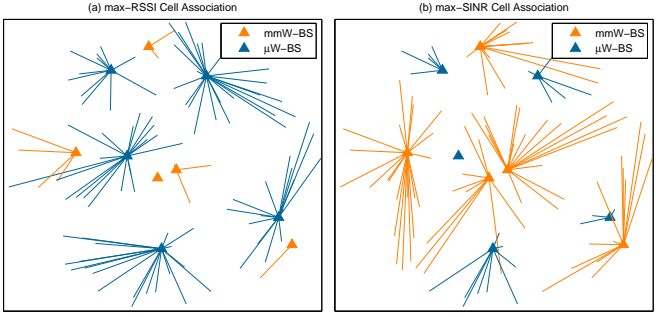}}\vspace{-0.4cm}
\caption{Cell association for an integrated $\mu$W-mmW network using (a) max-RSSI, and (b) max-SINR approaches. The triangles show the BSs and the orange and blue colors represent, respectively, the mmW and $\mu$W links.}\vspace{-.3cm}
\label{model}
\end{figure}
\subsection{Propagation Model at mmW and $\mu$W Frequency Bands}
 
Each mmW link between mmW-BS $n\in \mathcal{N}_1$ and UE $m \in \mathcal{M}$, located at $\boldsymbol{y}_n \in \mathbb{R}^2$ and $\boldsymbol{y}_m \in \mathbb{R}^2$, respectively, is characterized by the transmit power $p_n$, channel gain $g(\boldsymbol{y}_m,\boldsymbol{y}_n)$ and the antenna gain $\psi(\boldsymbol{y}_m,\boldsymbol{y}_n)$. Assuming that the total power $p_n$ is distributed uniformly over the mmW bandwidth, the achievable rate per unit of bandwidth for a UE $m$ assigned to mmW-BS $n$ is given by:
\begin{align}\label{ratemmw}
c_{m,n}^{\text{mmW}}
=\log_2 \!\left(\!1 + \frac{p_n \psi(\boldsymbol{y}_m,\boldsymbol{y}_n)g(\boldsymbol{y}_m,\boldsymbol{y}_n)}{w_1 N_0}\right),
\end{align}
where $w_1$ is the mmW bandwidth, $g(\boldsymbol{y}_m,\boldsymbol{y}_n)$ is the link channel gain, and $N_0$ is the noise power spectral density. Hereinafter, we represent $c_{m,n}^{\text{mmW}}$ by $c_{m,n}^{\text{LoS}}$ and $c_{m,n}^{\text{NLoS}}$, respectively, if the link is line-of-sight (LoS) and non-line-of-sight (NLoS).  
Here, $g(\boldsymbol{y}_m,\boldsymbol{y}_n) = L(\boldsymbol{y}_m,\boldsymbol{y}_n)^{-1}$, where the path loss $L(\boldsymbol{y}_m,\boldsymbol{y}_n)$ in dB follows the model of \cite{Ghosh14}:
 \begin{align}\label{mwpathloss}
 L(\boldsymbol{y}_m,\boldsymbol{y}_n)=b_1+a_1 10\log_{10}(\normipt{\boldsymbol{y}_m-\boldsymbol{y}_n})+\chi,
 \end{align}
where $a_1$ represents the slope of the best linear fit to the propagation measurement in mmW frequency band and $b_1$ is the path loss (in dB) for $1$ meter of distance. In addition, $\chi$ models the deviation in fitting (in dB) which is a Gaussian random variable with zero mean and variance $\xi_1^2$. 

For each UE-BS pair $(m,n)$, let $\zeta_{m,n}$ be a Bernoulli random variable with success probability $\rho_{m,n}$ that indicates whether the mmW link is LoS, $\zeta_{m,n}=1$, or NLoS, $\zeta_{m,n}=0$. Different path loss parameters in \eqref{mwpathloss} are considered for the LoS and NLoS links, as listed in Table. \ref{tabsim}.

At $\mu$W band, the achievable rate per unit of bandwidth for a UE $m \in \mathcal{M}$ associated with $\mu$W-BS $n \in \mathcal{N}_2$ is given by:
\begin{align}\label{ratemuw}
c^{\mu\text{W}}_{m,n}
=\log_2 \left(1 + \frac{p_n g(\boldsymbol{y}_m,\boldsymbol{y}_n)}{\sum_{n'\neq n} p_{n'} g(\boldsymbol{y}_m,\boldsymbol{y}_{n'}) + w_2 N_0}\right),
\end{align}
where the total power $p_n$ is distributed uniformly over the $\mu$W bandwidth, $w_2$, and the channel gain is characterized by parameters, $a_2,b_2$ and $\xi_2$, similar to \eqref{mwpathloss}.
\subsection{Problem Formulation}
The cell association problem can be defined as a decision policy $\pi$ which, for any UE-BS pair $(m,n)$, it outputs a binary variable $x_{m,n} \in \{0,1\}$, where $x_{m,n}=1$ indicates that UE $m$ is assigned to BS $n$, otherwise, $x_{m,n}=0$. 
Further, we define the BS $n$'s load, $\kappa_n$ as
\begin{align}\label{loaddef}
\kappa_n = \sum_{m=1}^{M}x_{m,n}.
\end{align}
Using \eqref{loaddef}, the \emph{maximum load difference} can be defined as the difference of the load for the BSs with the maximum and minimum number of associated UEs, as follow:
\begin{align}\label{loaddifference}
\Delta_{\kappa}(\pi) = \max(\kappa_n)-\min(\kappa_n).
\end{align}

In \eqref{loaddifference}, a smaller $\Delta_{\kappa}(\pi)$ implies better load balancing. In general, it is desirable to achieve uniform loads for all BSs, i.e., $\Delta_{\kappa}(\pi)=0$. However, by using conventional cell association approaches, such as max-SINR and max-RSSI schemes \cite{6008531,6497017}, as shown in Fig. \ref{model}, the network will exhibit a severely unbalanced load. In fact, the max-RSSI scheme assigns most of the UEs to $\mu$W-BS, due to the smaller path loss over the $\mu$W frequency band. On the other hand, the max-SINR scheme assigns most of the UEs to the mmW-BSs, due to the directional transmissions and less interference. 
In addition, in Figs. \ref{sim6} and \ref{simulation7}, we show by simulations that the CRE techniques used in small cell networks \cite{6008531} may not effectively improve load balancing in mmW-$\mu$W networks, due to the large gap in the RSSI and SINR values for mmW and $\mu$W links. Our joint mmW-$\mu$W cell association problem is thus given by:
\begin{IEEEeqnarray}{rCl}\label{eq4}
&&\maxim_{\boldsymbol{x}} \,\, \sum_{n=1}^{N}\sum_{m=1}^{M}x_{m,n}U_{m,n}(\boldsymbol{x}), \IEEEyessubnumber\label{1a}\\
\textrm{s.t.} \,\,\,\,
&& \sum_{n \in \mathcal{N}}x_{m,n}\leq 1,\,\,\,\,\,\,\,\,\,\,\forall m \in \mathcal{M},\IEEEyessubnumber\label{1b}\\
&& \kappa_n \leq q_n^{\textrm{max}},\,\,\,\,\,\,\,\,\,\,\,\,\,\,\,\,\,\,\,\forall n \in \mathcal{N},\IEEEyessubnumber\label{1c}\\
&& \kappa_n \geq q_n^{\textrm{min}},\,\,\,\,\,\,\,\,\,\,\,\,\,\,\,\,\,\,\,\,\forall n \in \mathcal{N},\IEEEyessubnumber\label{1d}\\
&& x_{m,n} \in \{0,1\},\IEEEyessubnumber\label{1e}
\end{IEEEeqnarray}
where $\mathcal{N}=\mathcal{N}_1 \cup \mathcal{N}_2$ is the set of all $N=N_1 + N_2$ BSs and $U_{m,n}$ denotes the utility of the UE $m$ associated to the BS $n$. Moreover, $q_n^{\textrm{max}}$ and $q_n^{\textrm{min}}$ denote, respectively, \emph{the maximum and the minimum quotas} for BS $n$ which represent the maximum and the minimum number of UEs that it can serve. 
We let $0\leq q_n^{\textrm{min}} \leq q_n^{\textrm{max}}$ and $\sum_{n \in \mathcal{N}}q_n^{\textrm{min}} \leq M \leq \sum_{n \in \mathcal{N}}q_n^{\textrm{max}}$ to ensure that a feasible solution exists. As we elaborate later in Section IV, constraints \eqref{1c}-\eqref{1d} are introduced to balance the network's load. Next, we make the following observation:
\begin{remark}
With $q_n^{\textrm{min}}=0$ and $q_n^{\textrm{max}}=M$ for $\forall n \in \mathcal{N}$, the optimization problem \eqref{1a}-\eqref{1e} does not incorporate load balancing.
\end{remark}
The cell association for an arbitrary UE depends on the associations of the other UEs, due to the quota constraints \eqref{1c}-\eqref{1d}. In addition, the utility of a UE may depend on whether the associated BS is a mmW-BS or a $\mu$W-BS. 
\section{Cell Association as a Matching Game with Minimum Quotas}
The downlink association problem in (\ref{1a})-(\ref{1e}) is a $0$-$1$ integer programming problem for assigning UEs to BSs which does not admit a closed-form solution and has exponential complexity\cite{4036195}. In fact, for such a cell association problem, an exhaustive search requires a comparison of $O(N^M)$ assignments, which cannot adapt to the dynamics of dense cellular networks, particularly, when using the mmW frequency band. 

In this regard, centralized cell association schemes require the BSs to send the network information to the radio network controller (RNC). Such implementations carried out by the RNC are updated at relatively long timescales. This can be detrimental for the mmW UEs that frequently experience NLoS transmissions. To this end, we propose a distributed solution for the mmW-$\mu$W cell association problem. 

\subsection{Cell Association as a Matching Game: Preliminaries}
To solve the problem in (\ref{1a})-(\ref{1e}), we propose a novel solution based on \emph{matching theory}, a mathematical framework that provides a decentralized solution with tractable complexity for combinatorial problems, such as the one in (\ref{1a})-(\ref{1e}) \cite{Roth92,eduard11}. A \emph{matching game} is essentially a two-sided assignment problem between two disjoint sets of players in which the players of each set are interested to be matched to the players of the other set, according to \textit{preference relations}. In our model, over each cell association time frame, the set of BSs, $\mathcal{N}$, and the set of UEs, $\mathcal{M}$, are the two sets of players of the matching game. A preference relation $\succ$ is defined as a complete, reflexive, and transitive binary relation between the elements of a given set. Here, we let $\succ_m$ be the preference relation of UE $m$ and denote $n\succ_m n'$, if UE $m$ prefers BS $n$ more than $n'$. Similarly, we use $\succ_n$ to denote the preference relation of BS $n \in \mathcal{N}$.

To define the preference relations, we can introduce individual utility functions for each UE and BS, using which they can rank one another. In the proposed cell association problem, the preference relations of UEs will depend only on the local average achievable rate information, while the BSs will use network-wide information to distribute the loads and maximize the sum utility. In fact, matching-based cell association provides a suitable framework to balance the load by properly adjusting the maximum and minimum BS quotas.

\subsection{Cell Association as a Matching Game}
Each cell association policy $\pi$ determines the allocation of a subset of UEs to each BS. Thus, the problem can be defined as a \textit{one-to-many matching game}:
\begin{definition}\label{def1}
Given two disjoint finite sets of players $\mathcal{M}$ and $\mathcal{N}$, the cell association policy, $\pi$, can be defined as a a one-to-many \textit{matching relation}, $\pi\!:\!\mathcal{N}  \rightarrow \mathcal{M}$ that satisfies 1) $\forall n \in \mathcal{N}, \pi(n) \subseteq \mathcal{M}$, 2) $\forall m \in \mathcal{M}, \pi(m) \in \mathcal{N}$, and 3) $\pi(m)=n$, if and only if  $m \in \pi(n)$.
\end{definition}    
In fact, $\pi(m)=n$ implies that $x_{m,n}=1$, otherwise $x_{m,n}=0$. One can easily see from Definition \ref{def1} that the proposed matching game inherently satisfies the constraints in \eqref{1b} and \eqref{1e}. In addition, $\pi$ is a \textit{feasible matching}, if it satisfies the quota constraints, i.e., $|\pi(m)| \in \{0,1\}$ and $q_n^{\textrm{min}} \leq \kappa_n=|\pi(n)| \leq q_n^{\textrm{max}}$, where $|.|$ denotes the set cardinality. Next, we define suitable utility functions.
\subsection{Utility and Preference Relations of the UEs and BSs}
For mmW links, a UE may experience multiple LoS/NLoS transmissions with different rates during the time that cell association is not updated. Thus, the utilities of UEs to BSs must be a function of the average rate. Here, we define the utility function of UE $m$ for BS $n$ as:
\begin{align}\label{util-UE}
U_{m}(n) &= \log\left[f(\boldsymbol{k}_{m,n})c^{\text{LoS}}_{m,n} +\left(1-f(\boldsymbol{k}_{m,n})\right)c^{\text{NLoS}}_{m,n}\right]\mathbbm{1}_{n \in \mathcal{N}_1}\notag\\
&+\log\left[c^{\mu\text{W}}_{m,n}\right]\mathbbm{1}_{n \in \mathcal{N}_2},
\end{align}
where, 
\begin{align}\label{beta}
\mathbbm{1}_{n \in \mathcal{N}_i} = 
\begin{cases}
1 &\text{if}\,\, n \in \mathcal{N}_i,\\
0, &\text{if}\,\, n \in \mathcal{N}_{j\neq i},
\end{cases}
\end{align}
and $\boldsymbol{k}_{m,n}$ is a vector composed of elements, $k_{m,n}(t')$ where $t'=t-1,t-2,\cdots, 0$, is the number of successful LoS transmissions from mmW-BS $n$ to UE $m$ and $f(k_{m,n}(t))$ is a metric that each UE uses to estimate the LoS probability $\rho_{m,n}$ for cell association at time $t$. In practice, the UEs can update a moving average of the number of LoS transmissions from each mmW-BS by using:
\begin{align}\label{f-func}
f(\boldsymbol{k}_{m,n}(t)) = \lambda \frac{k_{m,n}(t)x_{m,n}}{k}+(1-\lambda)f(\boldsymbol{k}_{m,n}(t-1)),
\end{align}
where $\lambda$ is a constant smoothing factor between $0$ and $1$ and $k$ is the number of transmission slots within the time window in which the association policy $\pi$ is not updated.
Using the utilities in \eqref{util-UE}, the preference relations of UEs are:
\begin{align}\label{prefer-UE}
n \succ_m n' \Leftrightarrow U_{m}(n) \geq U_{m}(n'), 
\end{align}
for $\forall m \in \mathcal{M}$, and $\forall n,n' \in \mathcal{N}$.

We note that assigning UEs to their most preferred BS may not admit a feasible matching in general. In other words, in order to satisfy the minimum quotas of the BSs, some UEs may have to be assigned to a lower ranked BS. Therefore, a suitable mechanism is required at the level of the BSs to determine which UEs must be assigned to the BSs with unsatisfied minimum quotas. To this end, all BSs must use the same preference profile, known as \textit{master list (ML)}, $\succ_{\textrm{ML}}$ with $\succ_n \,\equiv \,\succ_{\textrm{ML}}$, $\forall n \in \mathcal{N}$, as follow: 
\begin{align}\label{prefer-BS}
m \succ_{\textrm{ML}} m' \Leftrightarrow U_{\textrm{ML}}(m) \geq U_{\textrm{ML}}(m'),
\end{align}
where,  
\begin{align}\label{util-BS}
U_{\textrm{ML}}(m) = \{U_{m}(n')| U_{m}(n')\geq U_{m}(n), \forall n \in \mathcal{N} \}.
\end{align}
In fact, \eqref{util-BS} implies that BSs give higher priority to a UE that can achieve higher utility by being assigned to its preferred BS. This allows maximizing the sum utility in \eqref{1a}. To form the ML in practice, BSs only require to exchange the ordering of their nearby UEs to neighboring BSs.
\begin{algorithm}[t]
\footnotesize{
\caption{Proposed Cell Association Algorithm}\label{algo:1}
\textbf{Inputs:}\,\,$\succ_{\textrm{ML}}$, $\succ_m, \forall m \in \mathcal{M}$, $q_n^{\textrm{max}}$, $q_n^{\textrm{min}}, \forall n \in \mathcal{N}$.\\
\textbf{Outputs:}\,\, $\pi$, $\boldsymbol{x}$.
\begin{algorithmic}[1]
\State \text{Initialize:} $\pi(m)=\emptyset$, $\forall m \in \mathcal{M}$, $\mathcal{M}'=\mathcal{M}$.
\State Choose the UE $m^* \in \mathcal{M}'$ that has the highest rank in ML profile, i.e., $m^* \succ_{\textrm{ML}} m$, $\forall m \in \mathcal{M}'$.
\State Let $\pi(m^*)=n$, where $n$ is the most preferred BS based on $\succ_{m^*}$ with $\kappa_n<q_n^{\textrm{max}}$. Moreover, add $m^*$ to $\pi(n)$ and remove it from $\mathcal{M}'$.
\Repeat \,\,\,Steps 3 to 4 \Until{$\sum_{n \in \mathcal{N}} \lfloor q_n^{\textrm{min}}-\kappa_n\rfloor^{+}=|\mathcal{M}'|.$} 
\While{$\mathcal{M}' \neq \emptyset$}
\State Choose the UE $m^* \in \mathcal{M}'$ that has the highest rank in ML profile, i.e., $m^* \succ_{\textrm{ML}} m$, $\forall m \in \mathcal{M}'$.
\State Let $\pi(m^*)=n$, where $n$ is the most preferred BS based on $\succ_{m^*}$ with $\kappa_n<q_n^{\textrm{min}}$. Add $m^*$ to $\pi(n)$ and remove it from $\mathcal{M}'$.
\EndWhile 
\end{algorithmic}\label{Algorithm1}}
\end{algorithm}
\setlength{\textfloatsep}{0pt}
\section{Proposed Cell Association and Load Balancing Algorithm}
\setlength{\textfloatsep}{0pt}
To solve the proposed cell association matching problem, we consider two important concepts of \textit{Pareto optimality} and \textit{two-sided stability}. 
A PO matching is defined as follow \cite{Fragiadakis12}:
\begin{definition}
A cell association policy, $\pi$, is \textit{Pareto optimal}, if there is no other feasible matching policy $\pi'$ such that $\pi'$ is preferred by all UEs over $\pi$, $\pi' \succeq_{m} \pi$, for all $m \in \mathcal{M}$, and strictly preferred over $\pi$, $\pi' \succ_{m} \pi$, for some UEs $m \in \mathcal{M}$. 
\end{definition}

In fact, PO is a widely adopted notion of efficiency for distributed mechanisms where each entity, here each UE, aims to maximize its own utility. Furthermore, the concept of two-sided \textit{stable matching} between UEs and BSs is defined as follows \cite{Roth92}:
\begin{definition}
A UE-BS pair $(m,n) \notin \pi$ is said to be a \textit{blocking pair} of the matching $\pi$, if and only if $m \succ_{n} m'$ for some $m' \in \pi(n)$ and $n \succ_m \pi(m)$.
Matching $\pi$ is \textit{stable}, if there is no blocking pair.
\end{definition}

A stable cell association policy ensures fairness for the UEs. That is, if a UE $m$ envies the assignment of another UE $m'$, then $m'$ must be preferred by the BS $\pi(m')$ to $m$, i.e., the envy of UE $m$ is not justified. When $\succ_n\, \equiv \, \succ_{\textrm{ML}}, \forall n \in\mathcal{N}$, as in our problem, the two-sided stable $\pi$ is also known as \textit{ML-fair} matching. 

For matching problems with no minimum quota, i.e., $q_n^{\textrm{min}}=0$, the well-known \emph{deferred acceptance (DA)} algorithm is used to find a stable matching such as in \cite{Roth92}, \cite{eduard11}, and \cite{6853635}. However, with minimum quotas, DA is no longer guaranteed to find a feasible solution.
\begin{proposition}
For cell association problems with minimum quota constraints, the standard DA algorithm may not admit a feasible solution.  
\end{proposition}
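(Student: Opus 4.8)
The plan is to establish this proposition by an explicit counterexample: a small cell association instance that satisfies the feasibility condition $\sum_{n\in\mathcal{N}}q_n^{\textrm{min}}\leq M\leq\sum_{n\in\mathcal{N}}q_n^{\textrm{max}}$, yet on which the UE-proposing deferred acceptance procedure halts with a matching $\pi$ that violates the minimum-quota constraint \eqref{1d} for at least one BS. First I would recall the mechanics of the standard DA algorithm in our setting: each currently unmatched UE proposes to its most preferred BS among those that have not yet rejected it; each BS $n$ tentatively retains the $q_n^{\textrm{max}}$ highest-ranked proposers according to $\succ_{\textrm{ML}}$ and rejects the remainder; the process iterates until no proposal is rejected. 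The key structural observation is that every step of DA acts only to keep $\kappa_n\leq q_n^{\textrm{max}}$ — a BS never solicits, nor forces, additional UEs in order to meet a lower bound — so the algorithm is entirely blind to the constraint $\kappa_n\geq q_n^{\textrm{min}}$.

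Concretely, I would take $\mathcal{M}=\{m_1,m_2\}$ and $\mathcal{N}=\{n_1,n_2\}$ with $q_{n_1}^{\textrm{min}}=q_{n_2}^{\textrm{min}}=1$ and $q_{n_1}^{\textrm{max}}=q_{n_2}^{\textrm{max}}=2$, which satisfies $\sum_{n\in\mathcal{N}}q_n^{\textrm{min}}=2=M\leq 4=\sum_{n\in\mathcal{N}}q_n^{\textrm{max}}$, so a feasible matching exists. Then I would choose the rate/utility parameters so that $n_1\succ_{m_1}n_2$ and $n_1\succ_{m_2}n_2$ — for instance, both UEs are physically close to $n_1$ — which by \eqref{prefer-UE} means both UEs most prefer $n_1$. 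Running DA: in the first round $m_1$ and $m_2$ both propose to $n_1$; since $q_{n_1}^{\textrm{max}}=2$, BS $n_1$ retains both and rejects neither, so the algorithm terminates immediately with $\pi(n_1)=\{m_1,m_2\}$ and $\pi(n_2)=\emptyset$. Hence $\kappa_{n_2}=0<1=q_{n_2}^{\textrm{min}}$, so $\pi$ is not a feasible matching, whereas the assignment $\pi'(m_1)=n_1,\ \pi'(m_2)=n_2$ is feasible. This proves the claim.

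The point that requires the most care — and which I would spell out rather than gloss over — is that the conclusion is genuinely a property of the \emph{standard} DA and not an artifact of a particular tie-breaking rule: in the instance above no BS ever reaches its maximum quota, so no rejection can occur under any tie-breaking rule, and every execution of DA returns the same infeasible $\pi$. I would also remark that imposing the common master list $\succ_{\textrm{ML}}$ on the BS side does not rescue feasibility here, since $n_2$ receives no proposal at all and therefore never consults its preference list; this is precisely what motivates the two-phase structure of Algorithm~\ref{algo:1}, whose second phase explicitly reassigns surplus UEs to BSs whose minimum quotas remain unmet.
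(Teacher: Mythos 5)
Your proof is correct and follows essentially the same route as the paper: both establish the proposition via an explicit small counterexample in which all UEs prefer the same BS, the maximum quotas never bind, and DA therefore terminates leaving some BS below its minimum quota (the paper uses three UEs and three BSs; your two-by-two instance is the same idea in minimal form). Your additional checks — that the global feasibility condition holds and that the failure is independent of tie-breaking — are sound but not a different method.
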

\begin{proof}
We prove this using an example. Let $\mathcal{M}=\{m_1,m_2,m_3\}$ and $\mathcal{N}=\{n_1,n_2,n_3\}$, with ML profile $m_1\succ_{\textrm{ML}}m_2\succ_{\textrm{ML}}m_3$. In addition, assume $q_n^{\textrm{min}}=1$, $q_n^{\textrm{max}}=2$ for all BSs, and $n_1\succ_{m_i}n_2 \succ_{m_i}n_3$, for all $m_i \in \mathcal{M}$. The DA algorithm for the UE-proposed solution yields $\pi(n_1)=\{m_1,m_2\}$, $\pi(n_2)=\{m_3\}$, and $\pi(n_3)=\emptyset$, which does not satisfy the minimum quota constraint for $n_3$.
\end{proof}

Therefore, a new algorithm must be developed to solve the problem. To this end, we propose the matching with minimum quota (MMQ) algorithm shown in Algorithm \ref{Algorithm1}, which is designed based on \cite{Fragiadakis12}. The proposed algorithm proceeds as follows. After initialization, in step 2, UE $m^*$ with the highest rank in the ML profile requests a connection with its most preferred BS $n$. If $\kappa_n$ is less than its maximum quota, UE $m^*$ will be accepted by BS $n$. This procedure continues in Steps 3 and 4 for the remaining UEs until the number of UEs is equal to the required number of UEs for meeting the minimum quota constraints, i.e., $\sum_{n \in \mathcal{N}} \lfloor q_n^{\textrm{min}}-\kappa_n\rfloor^{+}=|\mathcal{M}'|$, where $\lfloor x \rfloor^+=max(x,0)$. Next, in Step 7, the most preferred UE based on the ML profile must be assigned only to its most preferred BS from the subset of $\mathcal{N}$ with $\kappa_n<q_n^{\textrm{min}}$. In fact, our algorithm allows each UE to be assigned to its most preferred BS, as long as the minimum and maximum quota constraints are not violated. The algorithm terminates once all the UEs are assigned to a BS. The proposed, distributed matching algorithm exhibits the following properties:
\begin{theorem}\label{theorem1}
Algorithm \ref{Algorithm1} is guaranteed to yield a feasible PO and stable matching between UEs and BSs.
\end{theorem}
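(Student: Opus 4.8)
\noindent\emph{Proof plan.}
Theorem~\ref{theorem1} bundles three assertions about the output $\pi$ of Algorithm~\ref{Algorithm1} --- that it is \emph{feasible}, \emph{Pareto optimal} for the UEs, and \emph{stable} (ML-fair) --- which I would prove in that order. Throughout I would track the running loads $\kappa_n$, the set $\mathcal{M}'$ of still-unmatched UEs, and the total minimum-quota deficit $D:=\sum_{n\in\mathcal{N}}\lfloor q_n^{\textrm{min}}-\kappa_n\rfloor^{+}$, and repeatedly use that each $\kappa_n$ is nondecreasing over the run. Feasibility is the warm-up: a UE enters BS $n$ only when $\kappa_n<q_n^{\textrm{max}}$ (Step 3) or $\kappa_n<q_n^{\textrm{min}}\le q_n^{\textrm{max}}$ (Step 7), so \eqref{1c} is preserved, while \eqref{1b} and \eqref{1e} hold by construction (Definition~\ref{def1}); Step 3 is well defined since $\sum_n q_n^{\textrm{max}}\ge M$ keeps some BS below its maximum while $\mathcal{M}'\neq\emptyset$. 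The key object is the potential $g:=|\mathcal{M}'|-D$: it starts at $M-\sum_n q_n^{\textrm{min}}\ge 0$, and each Step-3 assignment decreases $|\mathcal{M}'|$ by one and $D$ by one or zero, so $g$ is nonincreasing and moves by at most one per step; hence the loop of Steps 3--4 halts exactly at $g=0$, i.e. $|\mathcal{M}'|=D$ (this is where the hypothesis $\sum_n q_n^{\textrm{min}}\le M$ is used). In the while-loop the invariant $|\mathcal{M}'|=D$ persists, since each Step-7 assignment goes to a BS with a genuine deficit (one exists because $D=|\mathcal{M}'|\ge 1$ there) and lowers both sides by one; when $\mathcal{M}'=\emptyset$ we therefore have $D=0$, i.e. $\kappa_n\ge q_n^{\textrm{min}}$ for all $n$, which is \eqref{1d}. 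So $\pi$ is feasible.

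For \emph{stability} I would argue by contradiction. Suppose $(m,n)\notin\pi$ blocks $\pi$, i.e. $n\succ_m\pi(m)$ and $m\succ_{\textrm{ML}}m'$ for some $m'\in\pi(n)$ (recall $\succ_n\equiv\succ_{\textrm{ML}}$). Since Steps 2 and 6 always pick the $\succ_{\textrm{ML}}$-highest unmatched UE, $\pi$ matches UEs in $\succ_{\textrm{ML}}$ order, so $m$ is matched strictly before $m'$. If $m$ is matched in Steps 2--4, then at that moment $n$ --- strictly preferred by $m$ over $\pi(m)$, yet passed over --- must already be at $\kappa_n=q_n^{\textrm{max}}$; monotonicity keeps it full, so $m'$ (matched later, in either stage) can never be added to $n$, contradicting $m'\in\pi(n)$. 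If $m$ is matched in the while-loop, then so is $m'$, and $m$ took the most preferred BS with $\kappa_n<q_n^{\textrm{min}}$; since $n$ is strictly preferred by $m$ yet passed over, $\kappa_n\ge q_n^{\textrm{min}}$ at that moment, and monotonicity again forbids $m'$ from ever being placed at $n$ in Step 7. Either way $m'\notin\pi(n)$, a contradiction, so $\pi$ has no blocking pair.

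For \emph{Pareto optimality} --- the hard part --- the plan has two stages. First I would prove the structural claim that at \emph{every} step the algorithm assigns the current UE its most preferred BS among those whose selection keeps the state \emph{completable}, where a state (loads $\kappa_n$, with $r$ UEs still to place) is completable if those $r$ UEs can be distributed so that \eqref{1c}--\eqref{1d} all hold --- a condition depending only on $(\kappa_n)$ and $r$. During Steps 2--4 the gap $g$ is positive (it is the loop's continuation condition), so every not-full BS is completability-preserving and Step 3's restriction is vacuous; once $g=0$, i.e. inside the while-loop, sending a UE to a BS with no deficit would break completability, so the completability-preserving BSs are exactly the deficit BSs ranged over by Step 7. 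This is precisely where the Step-4 stopping test earns its keep. Granting the claim, I would run the standard serial-dictatorship argument: order the UEs $m_1\succ_{\textrm{ML}}\cdots\succ_{\textrm{ML}}m_M$ as the algorithm does (assuming, generically, strict preferences), suppose a feasible $\pi'$ Pareto-dominates $\pi$, and let $m_i$ be the $\succ_{\textrm{ML}}$-highest UE with $\pi'(m_i)\succ_{m_i}\pi(m_i)$. A Pareto-dominating $\pi'$ leaves no UE unmatched (else that UE, matched under $\pi$, would strictly lose), and for $j<i$ the UE $m_j$ is not a strict gainer, hence by Pareto-domination indifferent between $\pi(m_j)$ and $\pi'(m_j)$, hence $\pi'(m_j)=\pi(m_j)$. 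So the loads after $m_1,\dots,m_{i-1}$ --- exactly those the algorithm holds just before processing $m_i$ --- agree under $\pi$ and $\pi'$, and the set of completability-preserving BSs for $m_i$ is the same in both. Since $\pi'$ is feasible, $\pi'(m_i)$ lies in that set, whence $\pi(m_i)\succeq_{m_i}\pi'(m_i)$ by the structural claim, contradicting $\pi'(m_i)\succ_{m_i}\pi(m_i)$. So no Pareto-dominating $\pi'$ exists, and $\pi$ is Pareto optimal.

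The main obstacle is concentrated in the Pareto-optimality step: proving the completability characterization --- that the Step-4 rule is \emph{exactly} what stops a first-stage assignment from later stranding some BS's minimum quota, so that the two-stage greedy is genuinely optimizing over all feasibly completable options rather than merely over the not-full BSs. I would also take care with indifferences; the argument above uses strict preferences and a fixed deterministic tie-break, which is generic but should be made explicit. Once the monotonicity of $\kappa_n$ and the potential $g$ are in hand, the feasibility and stability parts are routine.
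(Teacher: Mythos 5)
Your proposal is correct, and it takes a noticeably different (and more complete) route than the paper's own proof. The paper's argument is a short per-UE case analysis: for stability it observes that, since all BSs rank via $\succ_{\textrm{ML}}$ and UEs are processed in ML order, any envied UE must be ML-preferred; for Pareto optimality it only shows that \emph{moving a single UE} to a more preferred BS breaks either its rejecting BS's maximum quota or its own BS's minimum quota. Your stability argument is essentially the rigorous version of the paper's one-liner (processed-in-ML-order plus monotonicity of $\kappa_n$), but your Pareto-optimality argument is genuinely different: by introducing the ``completability'' characterization (which, as you note, reduces to the invariant $g=|\mathcal{M}'|-\sum_n\lfloor q_n^{\textrm{min}}-\kappa_n\rfloor^{+}\ge 0$ once $\sum_n q_n^{\textrm{max}}\ge M$ keeps the capacity side automatic) and running a serial-dictatorship induction on the ML-highest strict gainer, you rule out \emph{arbitrary} Pareto-dominating feasible matchings, including multi-UE exchanges and swaps that keep all loads unchanged --- a case the paper's single-deviation argument does not literally cover, even though it is exactly what Definition 2 requires. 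Your potential-function treatment of termination and feasibility (Step 3 well-defined via $\sum_n q_n^{\textrm{max}}\ge M$, the repeat loop exiting precisely at $g=0$, the while loop preserving $|\mathcal{M}'|=D$) is also made explicit where the paper takes it for granted. What the paper's approach buys is brevity and intuition; what yours buys is a proof that actually matches the stated definition of Pareto optimality and makes clear why the Step-4 stopping rule is exactly the right switch between the two stages. Your caveat about strict preferences and a fixed tie-break is well placed --- with indifferences, ``strictly gains or keeps the same BS'' needs the small extra remark that an indifferent UE may be treated as unmoved without affecting the load-agreement step.
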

\begin{proof}
If the cell association $\pi$, given by Algorithm \ref{Algorithm1} is not PO, a UE $m$ must exist that can benefit by being assigned to another BS $n$, i.e., $n\succ_m \pi(m)$. There are two possible cases to consider. First, $n\succ_m \pi(m)$ and $m \notin \pi(n)$ imply that UE $m$ has applied to BS $n$ prior to $\pi(m)$ and is rejected, due to $\kappa_n=q_{n}^{\textrm{max}}$ and $m' \succ_{\textrm{ML}} m$, for all $m' \in \pi(n)$. Therefore, adding $m$ to $\pi(n)$ does not yield a feasible solution. Second, UE $m$ is assigned to $\pi(m)$ to satisfy minimum quota constrain for $\pi(m)$. This means re-allocating $m$ to BS $n$ will violate the minimum quota criterion for $\pi(m)$ and is not feasible. Therefore, the given solution is feasible Pareto optimal.

To prove the stability, we note that if UE $m$ prefers to be assigned to BS $\pi(m')$, that implies $m' \succ_{\textrm{ML}} m$, otherwise, $\pi(m)=\pi(m')$. Hence, no blocking pair exists and the solution is stable.    
\end{proof}

We must note that Pareto optimality and stability cannot be inherently achieved if the BSs do not follow the ML preference profile. In fact, for $\succ_n \neq \succ_{\textrm{ML}}$, there is no algorithm in general that can guarantee a feasible PO and stable solution \cite{Fragiadakis12}. 
\vspace{-.1cm}
\section{Simulation Results}\label{sec:sim}\vspace{-.1cm}
For simulations, we consider a network with $N_1=10$ mmW-BSs, $N_2=10$ $\mu$W-BSs, and up to $M=100$ UEs located uniformly and randomly over an area with diameter $r=1$ km. The main parameters are summarized in Table \ref{tabsim}. The average probability of LoS for each mmW BS-UE pair is sampled from a uniform distribution, $\rho_{m,n} \in \left[0,1\right]$. All statistical results are averaged over a large number of independent runs.

We compare the performance of the proposed MMQ algorithm with both conventional max-SINR and max-RSSI approaches. We also consider a CRE with bias factor $\gamma_{\text{RSSI}}$ and $\gamma_{\text{SINR}}$, respectively, for the max-RSSI and max-SINR schemes for further comparisons. To calculate the rates, the total bandwidth at each BS is allocated equally to the associated UEs. That is, $r^{\text{mmW}}_{m,n}=\frac{w_1}{\kappa_n}c_{m,n}^{\text{mmW}}$, where $r^{\text{mmW}}_{m,n}$ denotes the achievable rate for UE $m$ associated with mmW-BS $n$. Moreover, $r^{\mu\text{W}}_{m,n}=\frac{w_2}{\kappa_n}c_{m,n}^{\mu\text{W}}$, where $r^{\mu\text{W}}_{m,n}$ denotes the achievable rate for UE $m$ assigned to $\mu$W-BS $n$.
In \cite{6497017}, it is shown that for logarithmic utilities, as in \eqref{util-UE}, uniform resource allocation maximizes the sum utility. 
{\fontsize{7}{20}
	\begin{table}[!t]
	
	\centering
	\caption{
		\vspace*{-0em}Simulation parameters}\vspace*{-0em}
	\begin{tabular}{|c|c|c|}
	\hline
	\bf{Notation} & \bf{Parameter} & \bf{Value} \\
	\hline
	$p_n$ & Transmit power & $30$ dBm\\
	\hline
	$(\omega_1,\omega_2)$ & Bandwidth & ($1$ GHz, $10$ MHz)\\
	\hline
	($\xi_{1,\text{LoS}},\xi_{1,\text{NLoS}}, \xi_2$) & Standard deviation of path loss& ($5.2,7.6,10$) \cite{andrew16} \\
	\hline
	($a_{1,\text{LoS}},a_{1,\text{NLoS}},a_{2}$) & Path loss exponent& (2,4,3) \cite{andrew16}\\
	\hline
	($b_1, b_2$) & Path loss at $1$ m& ($70,38$) dB\\
	\hline
	$\psi$ & Antenna gain& $18$ dBi \cite{andrew16}\\
	\hline
	$N_0$ & Noise power spectral density& $-174$ dBm/Hz \\
	\hline
	$M$ & Number of UEs& From $10$ to $100$\\
	\hline
	$q_n^{\text{max}}$ & Maximum quota& $M$\\
	\hline
	\end{tabular}\label{tabsim}\vspace{-0cm}
	\end{table}}

Fig. \ref{sim1} shows the average sum-rate for the proposed MMQ approach, compared to max-RSSI and max-SINR approaches versus the number of UEs. The bias factors are chosen such that near uniform loads are achieved for all the BSs. The minimum quotas for $\mu$W-BSs are chosen randomly from $0$ to $\lfloor M/N_2\rfloor$, with $\lfloor . \rfloor$ denoting the floor operand. The results show that the proposed approach achieves up to $14\%$ and $18\%$ improvements compared to, respectively, the max-SINR and the max-RSSI schemes, for $M=50$. This is due to the fact that the achievable rate is a nonlinear function of the SINR or RSSI metrics. Hence, average SINR or RSSI , with respect to $\zeta_{m,n}$, cannot be used to find the average achievable rate. However, the proposed approach directly relies on the average achievable rate, as shown in \eqref{util-UE}. 

Fig. \ref{sim7} shows the optimal minimum quota for $\mu$W-BSs that yields the maximum average sum-rate, as the number of UEs varies, for different values of $N_1=N_2$. The minimum quota for mmW-BSs is zero, since the load of the mmW-BSs are higher than $\mu$W-BSs. The results show that the optimal minimum quota increases, as $M$ increases, since more UEs must be associated with the $\mu$W-BSs. Moreover, the optimal $q_n^{\textrm{min}}$ decreases as $N_1$ and $N_2$ increase, since more BSs are available. For $M=100$ UEs, we observe that the optimal minimum quotas are $q_n^{\textrm{min}}=8$, for all $n \in \mathcal{N}_2$, which implies that $80 \%$ of the UEs must be assigned to the $\mu$W-BSs. Hence, if sum rate is considered as the optimality criterion, the result does not yield a balanced network. 

\begin{figure}[!t]
\centering
\centerline{\includegraphics[width=8.4cm]{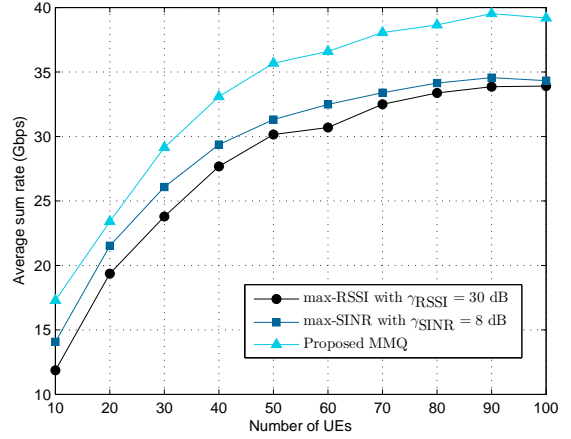}}\vspace{-0.3cm}
\caption{The average sum-rate (Gbps) versus the number of UEs $M$.}\vspace{-0.5cm}
\label{sim1}
\end{figure}

\begin{figure}[t!]
\centering
\centerline{\includegraphics[width=8.4cm]{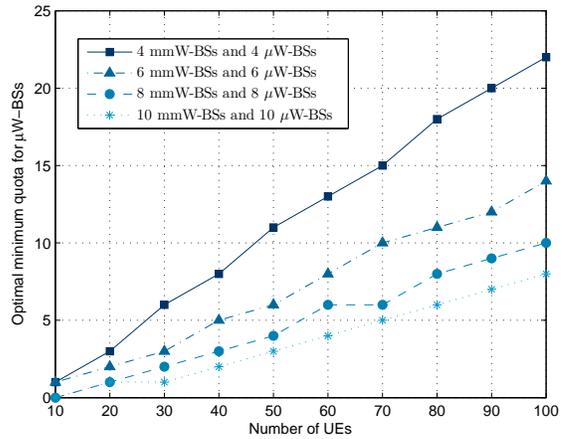}}\vspace{-0.3cm}
\caption{The optimal quota values for $\mu$W-BSs versus the number of UEs.}\vspace{-.5cm}
\label{sim7}
\end{figure} 

\begin{figure}[t!]
\centering
\centerline{\includegraphics[width=8.4cm]{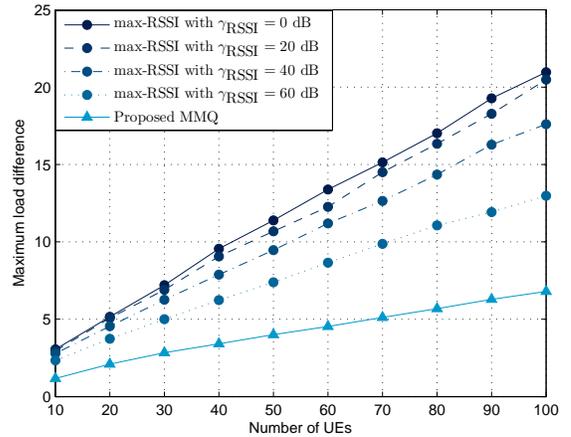}}\vspace{-0.3cm}
\caption{The maximum load difference, $\Delta_{\kappa}$, for the proposed MMQ approach, compared to the max-RSSI with CRE.}\vspace{0.2cm}
\label{sim6}
\end{figure}

In Fig. \ref{sim6}, the maximum load difference $\Delta_{\kappa}$, is evaluated for the proposed algorithm compared to max-RSSI approach with CRE under biasing values ranging from $0$ to $60$ dB. The results show that, as biasing increases, the load balancing decreases and then increases. For all biasing values, $\Delta_{\kappa}$ for the max-RSSI approach is significantly larger than the proposed MMQ algorithm. In fact, we observe that the proposed MMQ algorithm substantially improves the load balancing, reaching up to $48 \%$ compared to the max-RSSI with $\gamma_{\text{RSSI}}=40$ dB for $M=70$. This improvement is due to the fact that the CRE with biasing cannot precisely control the number of UEs re-associated from $\mu$W-BSs to the mmW-BSs. However, in the proposed approach, the BSs can directly control the number of associated UEs by adjusting their minimum quotas.  
\begin{figure}
\centering
\centerline{\includegraphics[width=8.4cm]{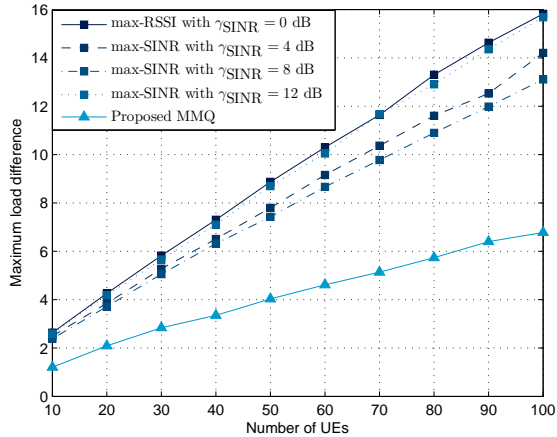}}\vspace{-0.3cm}
\caption{The maximum load difference, $\Delta_{\kappa}$, for the proposed MMQ approach, compared to the max-SINR with CRE.}\vspace{-.5cm}
\label{simulation7}
\end{figure}

Fig. \ref{simulation7} compares the maximum load difference resulting from the proposed MMQ algorithm, compared to the max-SINR approach with CRE. We observe that, as $\gamma_{\text{SINR}}$ is increased from $0$ to $8$ dB, the maximum load difference decreases. However, for $\gamma_{\text{SINR}}>8$ dB, the load difference increases, since a larger number of UEs is being assigned to the $\mu$W-BSs. Moreover, Fig. \ref{simulation7} shows that for all network sizes, the proposed approach substantially outperforms the max-SINR approach with CRE. In fact, the proposed approach decreases $\Delta_k$ by $47\%$, compared to max-SINR with $\gamma_{\text{SINR}}=8$ dB for $M=70$. Here, we can once again see that the minimum quota constraints allow BSs to control the load more precisely, compared to max-SINR with CRE.
\begin{figure}
\centering
\centerline{\includegraphics[width=8.4cm]{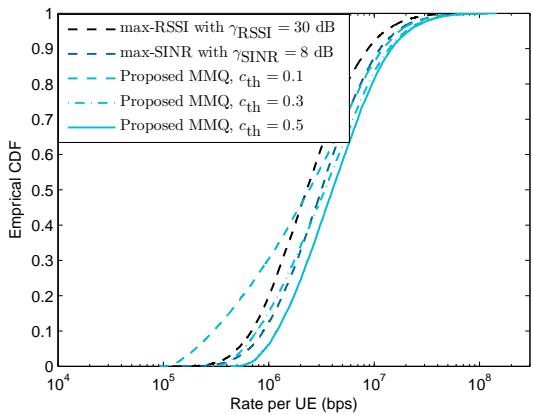}}\vspace{-0.3cm}
\caption{The empirical CDF of the rate per UE over $\mu$W frequency band for $M=100$ UEs.}\vspace{0.3cm}
\label{sim4}
\end{figure}

In Fig. \ref{sim4}, the statistics of the average rate per UE are shown over the $\mu$W band, compared to max-RSSI and max-SINR. Here, we focus on the average rate for $\mu$W links, since the mmW links achieve higher rates, due to the available bandwidth. The results show that, an inherent byproduct of any load balancing technique is the fact that some of the UEs will eventually be associated with an unpreferred $\mu$W-BS to satisfy the minimum quota constraints. Such UEs will then trade off rate for load balancing. To this end, parameter $c_{\text{th}}$ is defined as a utility threshold for UEs. That is, the UE $m$ is assigned to an unpreferred $\mu$W-BS $n$, if $U_{m}(n)\geq c_{\text{th}}$. $c_{\text{th}}$ allows controlling the tradeoff between a highly balanced load and a low average rate for the cell edge UEs. Fig. \ref{sim4} shows that for $c_{\text{th}}=0.5$, the proposed MMQ algorithm outperforms the max-RSSI and the max-SINR approaches with CRE.

\vspace{-0.1cm}
\section{Conclusions}
In this paper, we have proposed a novel cell association and load balancing framework for small base stations operating at mmW and $\mu$W frequency bands. We have formulated the problem as a one-to-many matching game with minimum quotas. To solve this game, we have proposed a distributed algorithm that considers the average LoS probability in addition to the achievable rate, while assigning UEs to the BSs. We have shown that the proposed algorithm yields a Pareto optimal and stable association policy. Simulation results have shown that the proposed MMQ algorithm outperforms the conventional max-RSSI and max-SINR schemes in terms of both performance and load balancing.
\vspace{-0.1cm}

\bibliographystyle{IEEEbib}
\bibliography{references}

\begin{thebibliography}{10}

\bibitem{6736746}
F.~Boccardi, R.W. Heath, A.~Lozano, T.L. Marzetta, and P.~Popovski,
\newblock ``Five disruptive technology directions for 5{G},''
\newblock {\em IEEE Communications Magazine}, vol. 52, no. 2, pp. 74--80,
  February 2014.

\bibitem{Rangan14}
S.~Rangan, T.~S. Rappaport, and E.~Erkip,
\newblock ``Millimeter-wave cellular wireless networks: Potentials and
  challenges,''
\newblock {\em Proceedings of the IEEE}, vol. 102, no. 3, pp. 366--385, March
  2014.

\bibitem{andrew16}
H.~Elshaer, M.~N. Kulkarni, F.~Boccardi, J.~G. Andrews, and M.~Dohler,
\newblock ``Downlink and uplink cell association with traditional macrocells
  and millimeter wave small cells,''
\newblock {\em arXiv:1601.05281}, Jan 2016.

\bibitem{park15}
J.~Park, S.~L. Kim, and J.~Zander,
\newblock ``Tractable resource management with uplink decoupled millimeter-wave
  overlay in ultra-dense cellular networks,''
\newblock {\em arXiv:1507.08979}, March 2016.

\bibitem{Ghosh14}
A.~Ghosh, R.~Ratasuk, P.~Moorut, T.~S. Rappaport, and S.~Sun,
\newblock ``Millimeter-{W}ave enhanced local area systems: A high-data-rate
  approach for future wireless networks,''
\newblock {\em IEEE Journal on Selected Areas in Communications}, vol. 32, no.
  6, pp. 1152 --1163, June 2014.

\bibitem{6008531}
I.~Guvenc,
\newblock ``Capacity and fairness analysis of heterogeneous networks with range
  expansion and interference coordination,''
\newblock {\em IEEE Communications Letters}, vol. 15, no. 10, pp. 1084--1087,
  October 2011.

\bibitem{6497017}
Q.~Ye, B.~Rong, Y.~Chen, M.~Al-Shalash, C.~Caramanis, and J.~G. Andrews,
\newblock ``User association for load balancing in heterogeneous cellular
  networks,''
\newblock {\em IEEE Transactions on Wireless Communications}, vol. 12, no. 6,
  pp. 2706--2716, June 2013.

\bibitem{4607241}
D.~Niyato and E.~Hossain,
\newblock ``Dynamics of network selection in heterogeneous wireless networks:
  An evolutionary game approach,''
\newblock {\em IEEE Transactions on Vehicular Technology}, vol. 58, no. 4, pp.
  2008--2017, May 2009.

\bibitem{6774981}
G.~Athanasiou, P.~C. Weeraddana, C.~Fischione, and L.~Tassiulas,
\newblock ``Optimizing client association for load balancing and fairness in
  millimeter-wave wireless networks,''
\newblock {\em IEEE/ACM Transactions on Networking}, vol. 23, no. 3, pp.
  836--850, June 2015.

\bibitem{Roth92}
A.~E. Roth and M.~A.~O. Sotomayor,
\newblock {\em Two-sided matching: {A} study in game-theoretic modeling and
  analysis},
\newblock Cambridge University Press, 1992.

\bibitem{eduard11}
E.~Jorswieck,
\newblock ``Stable matchings for resource allocation in wireless networks,''
\newblock in {\em Proc. of 17th International Conference on Digital Signal
  Processing (DSP)}, Corfu, Greece, July 2011.

\bibitem{4036195}
K.~Seong, M.~Mohseni, and J.~M. Cioffi,
\newblock ``Optimal resource allocation for {OFDMA} downlink systems,''
\newblock in {\em IEEE International Symposium on Information Theory}, Seattle,
  Washington, July 2006.

\bibitem{Fragiadakis12}
D.~E. Fragiadakis, A.~Iwasaki, P.~Troyan, S.~Ueda, and M.~Yokoo,
\newblock ``Strategyproof matching with minimum quotas,''
\newblock in {\em Proc. of 11th International Conference on Autonomous Agents
  and Multiagent Systems}, Valencia, Spain, June 2012.

\bibitem{6853635}
O.~Semiari, W.~Saad, S.~Valentin, M.~Bennis, and B.~Maham,
\newblock ``Matching theory for priority-based cell association in the downlink
  of wireless small cell networks,''
\newblock in {\em Proc. of IEEE International Conference on Acoustics, Speech
  and Signal Processing}, Florence, Italy, May 2014.

\end{thebibliography}
\end{document}